\newtheorem{lemma}{Lemma}
\newtheorem{proposition}{Proposition}
\newtheorem*{remark}{Remark}
\newcommand{\KLdiv}{\mathrm{D}_{\mathrm{KL}}}
\renewcommand{\t}{^{\mbox{\tiny\sf T}}} 
\newcommand{\blkdiag}{\mathrm{blkdiag}}
\newcommand{\R}{\mathbb{R}}
\newcommand{\E}{\mathbb{E}}
\newcommand{\tr}{\mathrm{tr}}
\renewcommand{\det}{\mathrm{det}}
\newcommand{\cov}{\mathrm{cov}}
\newcommand{\half}{\mbox{$\frac{1}{2}$}}
\title{\LARGE \bf  Discrete-Time Maximum Likelihood Neural Distribution Steering}
\author{George Rapakoulias$^{1}$ and Panagiotis Tsiotras$^{2}$
\thanks{$^{1}$ Ph.D. student, School of Aerospace Engineering, Georgia Institute of Technology, Atlanta, GA, 30332, USA. Email:
 {\tt\small grap@gatech.edu}}%
\thanks{$^{2}$ David and Andrew Lewis Chair and Professor, School of Aerospace Engineering, and Institute for Robotics and Intelligent Machines, Georgia Institute of Technology, Atlanta, GA, 30332, USA. Email:
 {\tt\small tsiotras@gatech.edu}}%
}
\begin{document}
\maketitle

\begin{abstract}
This paper studies the problem of steering the distribution of a discrete-time dynamical system from an initial distribution to a target distribution in finite time.
The formulation is fully nonlinear, allowing the use of general control policies, parametrized by neural networks.
Although similar solutions have been explored in the continuous-time context, extending these techniques to systems with discrete dynamics is not trivial.
The proposed algorithm results in a regularized maximum likelihood optimization problem, which is solved using machine learning techniques. 
After presenting the algorithm, we provide several numerical examples that illustrate the capabilities of the proposed method. 
We start from a simple problem that admits a solution through semidefinite programming, serving as a benchmark for the proposed approach.
Then, we employ the framework in more general problems that cannot be solved using existing techniques, such as problems with non-Gaussian boundary distributions and non-linear dynamics.
\end{abstract}

\section{Intoduction}

The problem of controlling distributions of dynamical systems has attracted increased attention in the past few years, mainly due to its practical applications in machine learning and generative AI \cite{ruthotto2021introduction}.
Compared to standard optimal control problems, where a feedback policy that minimizes a functional in the presence of uncertainty is sought, distribution control aims at directly steering the distribution of the system's state while minimizing a cost function \cite{caluya2021wasserstein, chen2021controlling}.
Apart from generative AI applications, this approach is suitable for controlling systems whose behavior is better captured by a distribution rather than a deterministic state variable.
Such applications include, for example, swarm and multiagent control \cite{saravanos2023distributed, Saravanos-RSS-21}, mean field games \cite{ruthotto2020machine, chen2023density}, opinion dynamics \cite{liu2022deep}, and safe stochastic model predictive control \cite{knaup2023safe, saravanos2022distributed}, among many others.

This work focuses on the problem of controlling the distribution of a deterministic, discrete-time, control-affine system with nonlinear drift.  
The problem of interest can be cast as the following infinite-dimensional optimization problem
\begin{subequations} \label{reference_problem}
\begin{align}
& \min_{ x_k, u_k } \; J =  \sum_{k = 0}^{N-1} { \E \left[ \| u_k \|^2 + V_k(x_k)\right]}, \label{ref:cost} \\
& x_{k+1} = f_k(x_k) + B_k u_k, \label{ref:dyn} \\
& x_0 \sim \rho_i, \label{ref:initial_distr} \\
& x_N \sim \rho_f, \label{ref:final_distr} 
\end{align}
\end{subequations}
where $\rho_i, \; \rho_f$ are the boundary state distributions, $f_k, B_k$ describe the system's prior dynamics, and $V_k(x_k)$ is a state-dependent cost that penalizes potentially undesirable regions of the state space.
Problem \eqref{reference_problem} is an Optimal Mass Transport problem with (nonlinear) prior dynamics (OMTwpd).
For continuous linear prior dynamics and $V(x_k) \equiv 0$, this problem is equivalent to an Optimal Transport (OT) problem in a transformed set of coordinates defined by the linear dynamics \cite{chen2016optimal}. 
Furthermore, for a quadratic state cost $V(x_k) = x_k\t Q_k x_k$, linear prior dynamics, and for Gaussian initial and terminal distributions, globally optimal solutions exist through the Covariance Steering (CS) framework for both continuous and discrete time settings. 
These can be computed efficiently using semidefinite programming (SDP) \cite{bakolas2018finite, liu2022optimal, balci2022exact, chen2015optimal, chen2015optimal2, rapakoulias2023discrete}.
For general, nonlinear systems or systems with non-Gaussian boundary distributions, a globally optimal solution is difficult to obtain. 
Local solutions through linearization have been explored in \cite{ridderhof2020chance}, while solutions utilizing characteristic functions, for non-Gaussian boundary distributions and noise, have been proposed in \cite{sivaramakrishnan2022distribution}. 
However, both of these methods utilize linear feedback policies and therefore, may be suboptimal.
Furthermore, the authors of \cite{balci2023density} propose a randomized policy for steering between Gaussian Mixture Models (GMM) with deterministic linear prior dynamics using randomized linear policies.
While the algorithm results in numerically efficient solutions with guaranteed convergence leveraging analytical results from the Covariance Steering theory, the optimality of the policy compared to more general nonlinear policies is not explored in \cite{balci2023density}. 

Focusing on formulations with nonlinear dynamics, most existing works concern systems in continuous time. 
In the simplest case where $\dot{x}_t = u_t$, the problem has been studied in the context of Continuous Normalizing Flows (CNFs) \cite{chen2018neural, onken2021ot} due to its applications in generative AI \cite{ruthotto2021introduction}.  
Its stochastic counterpart, referred to as the Schr\"{o}dinger Bridge Problem (SPB), has also been explored in a similar context \cite{chen2016relation}.
Works that account for more complicated prior dynamics usually impose some assumptions on their structure. 
For example, in \cite{liu2022deep, chen2022likelihood} the authors account for dynamics with a nonlinear drift term but otherwise require control in all states by restricting their analysis to cases where $B_k = I$, and focus on mean field games and generative AI applications.
A more general framework, allowing dynamics with fewer control channels than states is considered in \cite{caluya2020finite} where the authors extended the results of \cite{chen2016optimal} to feedback linearizable systems. 
Finally, a framework that does not require feedback linearizable dynamics is explored in \cite{caluya2021wasserstein} but the deterministic drift term is required to be the gradient of a potential function. 

The discrete-time problem has been studied much less.
The case of degenerate prior dynamics, i.e., $x_{k+1} = x_{k} + u_{k}$ has been addressed within the framework of Discrete Normalizing Flows (DNFs) \cite{behrmann2019invertible, huang2020convex, berg2018sylvester}. 
To the best of the authors' knowledge, the more general problem with nonlinear prior dynamics has not been addressed in the literature.
This problem is of interest for several reasons.
First, it captures the case where the system dynamics are inherently discrete, as in the digital implementation of control algorithms.
Furthermore, most types of neural networks can be analyzed through the lens of discrete dynamical systems, \cite{haber2017stable, he2016deep, pilipovsky2023probabilistic}, providing a significant drive towards more research in the discrete setting. 
Finally, from a computational point of view, solving the discrete-time steering problem requires storing the state vector at a constant number of intervals, requiring, therefore, a fixed memory budget, compared to continuous formulations which perform a temporal discretization based on the stiffness of the trained dynamics \cite{behrmann2019invertible}.  

To this end, in this work, we study Problem \eqref{reference_problem} using tools from machine learning and the DNF literature, specifically combining 
control-theoretic ideas and tools from \cite{behrmann2019invertible}. 
To bring Problem \eqref{reference_problem} to the framework of Normalizing Flows, we first relax the terminal distribution constraint \eqref{ref:final_distr} to a Kullback–Leibler (KL) divergence soft constraint, giving rise to the problem
\begin{subequations} \label{reference_problem_soft}
\begin{align}
& \hspace*{-3mm} \min_{ x_k, u_k } \; J = \sum_{k = 0}^{N-1} { \E \left[ \| u_k \|^2 + V_k(x_k) \right]} + \lambda \KLdiv (\rho_N \| \rho_f), \label{ref_soft:cost}\\
& x_{k+1} = f_k(x_k) + B_k u_k, \\
& x_0 \sim \rho_i,
\end{align}
\end{subequations}
where $\lambda > 0$. 
Henceforth, this is the general problem formulation we will use in this paper.

\section{Notation}
We use lowercase letters to denote vectors and vector random variables and capital letters to denote matrices. 
Given a function $F: \R^{n} \rightarrow \R^m$, its Jacobian is denoted by $\nabla F: \R^n \rightarrow \R^{n \times m} $. Distributions in $\R^n$ are denoted by $\rho$ and probability density functions (PDFs) by $p$. Given a random variable $x \sim \rho_x$ and a transformation $y = F(x)$, the pushforward of $x$ is denoted by $\rho_y = F_{\#} \rho_x$. 

\section{Preliminaries}

\subsection{Normalizing flows} \label{sub1A}

Let $\rho_1, \rho_2$ be two distributions with probability density functions $p_1(x), p_2(x)$ respectively. Their KL divergence is given by
\begin{subequations} \label{KLdiv}
    \begin{align}
    \hspace*{-3mm}
    \KLdiv(\rho_1  \|  \rho_2) & = \int_{-\infty}^{\infty} p_1(x) \log \frac{ p_1(x)}{ p_2(x)} \mathrm{d}x \label{KLdiv_analytic}\\
    & = \E_{x \sim \rho_1} \left[ \log \frac{p_1(x)}{p_2(x)} \right ] \\
    & = \E_{x \sim \rho_1} \left[ \log p_1(x) \right] - \E_{x \sim \rho_1} \left[ \log p_2(x) \right ] \label{KLdiv_simple}.
    \end{align}
\end{subequations}
Equation \eqref{KLdiv} suggests that calculating the KL divergence requires the analytic expressions of the PDFs of the two functions $\rho_1$ and $\rho_2$.
In the context of Problem \eqref{reference_problem_soft}, $\rho_1$ would be a target distribution, which we know explicitly, while $\rho_2$ would correspond to the distribution of the state at some time step of interest. 
The calculation of $\rho_2$ would therefore require propagating the initial state distribution through the nonlinear dynamics, which is generally intractable.
To overcome this issue, one can use the change of variables formula connecting the PDFs of two random variables that are linked through a diffeomorphic (invertible and differentiable) transformation. 
This is summarized in the following lemma:
\begin{lemma}~\cite{peyre2017computational} (Change of Variables)\label{change_of_vars} 
Let x be an $n$-dimensional random variable with known PDF, denoted $p_x(x)$, and let $F: \R^n \rightarrow \R^n$ be a diffeomorphism. 
The PDF of $y = F(x)$, denoted $p_y(y)$, can be calculated using the formula
\begin{equation} \label{f_change_of_vars}
    p_y(y) = p_x \left(F^{-1}(y) \right) \det\; \nabla F^{-1}(y).
\end{equation}
\end{lemma}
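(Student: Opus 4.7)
The plan is to derive the pushforward PDF by starting from the set-theoretic definition of the distribution of $y = F(x)$ and then invoking the multivariate substitution rule for Lebesgue integrals. Since $F$ is a diffeomorphism, $F^{-1}$ exists, is $C^1$, and has nonsingular Jacobian $\nabla F^{-1}$ everywhere, so every analytical ingredient required by the substitution theorem is automatically available.

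Concretely, first I would fix an arbitrary measurable set $A \subset \R^n$ and write
\begin{equation*}
\P(y \in A) = \P(F(x) \in A) = \P(x \in F^{-1}(A)) = \int_{F^{-1}(A)} p_x(x)\, \mathrm{d}x,
\end{equation*}
using invertibility of $F$ in the second equality and the definition of $p_x$ in the third. Next I would perform the change of variables $x = F^{-1}(y)$ in this integral; the substitution theorem for $C^1$-diffeomorphisms gives $\mathrm{d}x = |\det \nabla F^{-1}(y)|\, \mathrm{d}y$, and since $F^{-1}(A)$ maps to $A$ under $F^{-1}$, we obtain
\begin{equation*}
\int_{F^{-1}(A)} p_x(x)\, \mathrm{d}x = \int_{A} p_x\!\left(F^{-1}(y)\right) \left|\det \nabla F^{-1}(y)\right| \mathrm{d}y.
\end{equation*}
Comparing this with $\P(y \in A) = \int_A p_y(y)\, \mathrm{d}y$, which must also hold for every measurable $A$, and invoking the standard fact that two locally integrable functions whose integrals agree on every measurable set are equal almost everywhere, I would conclude that $p_y(y) = p_x(F^{-1}(y))\, |\det \nabla F^{-1}(y)|$ almost everywhere, matching \eqref{f_change_of_vars} (with the understanding that the determinant in the statement is taken in absolute value, or, equivalently, that $F$ is orientation-preserving).

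The step I expect to carry the real analytic weight is the application of the multivariate substitution rule itself; this is a nontrivial theorem whose proof uses the inverse function theorem together with a careful decomposition of $A$ into pieces where $\nabla F^{-1}$ is nearly constant. However, for the purposes of this lemma I would simply invoke it as a classical result, since the diffeomorphism hypothesis on $F$ places us squarely in its domain of applicability. A minor bookkeeping point will be to justify that the pointwise identification of the integrands holds everywhere the densities are defined (and not merely a.e.), which is standard whenever $p_x$ and $F^{-1}$ are continuous on the relevant domain.
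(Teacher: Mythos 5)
Your proof is correct and is the standard measure-theoretic derivation; the paper does not prove this lemma at all but simply cites it as a known result from the normalizing-flows/optimal-transport literature, so there is no in-paper argument to compare against. You are also right to flag that the determinant in \eqref{f_change_of_vars} should properly appear in absolute value (or one must assume $F$ is orientation-preserving); the paper's statement elides this, and your parenthetical is the correct reading.
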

Transformations that correspond to flow maps of continuous-time systems are invertible, as far as unique solutions exist for the differential equation describing their dynamics. 
In CNF problems without prior dynamics, this condition is satisfied because neural networks with finite weights and Lipschitz nonlinearities result in Lipshitz ODE dynamics, and the uniqueness of solutions is guaranteed through Picard's existence theorem  \cite{chen2018neural}.
In the discrete-time case, however, the invertibility of the network needs to be addressed explicitly. 
In this paper, we make use of the following lemma:  
\begin{lemma} ~\cite{behrmann2019invertible} (Flow Invertibility) \label{Invertibility}
Consider the discrete-time nonlinear system described by 
\begin{equation}
    x_{k+1} = x_k + f_k(x_k), \quad k = 0, 1, \dots, N-1,
\end{equation}
and the transformation $x_N = F(x_0)$. The transformation $F$ is invertible if,
for all $k = 0, 1, \dots, N-1$, the mappings $f_k$ are contractive.
\end{lemma}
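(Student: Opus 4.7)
The plan is to reduce the invertibility of $F$ to the invertibility of each one-step map and then solve the one-step inversion by a contraction mapping argument. I would write $g_k(x) := x + f_k(x)$ so that $x_{k+1} = g_k(x_k)$, and observe that $F = g_{N-1} \circ g_{N-2} \circ \cdots \circ g_0$. Since the composition of bijections is a bijection, and the inverse of a composition is the composition of the inverses in reverse order, it suffices to prove that each $g_k : \R^n \to \R^n$ is invertible.

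To invert $g_k$, I would fix an arbitrary target $y \in \R^n$ and search for $x$ satisfying $g_k(x) = y$, i.e., $x = y - f_k(x)$. Defining $T_y(x) := y - f_k(x)$, this recasts the inversion as a fixed-point problem. Interpreting ``contractive'' as $f_k$ being Lipschitz with constant $L_k < 1$, one obtains
\begin{equation*}
\|T_y(x) - T_y(x')\| \;=\; \|f_k(x) - f_k(x')\| \;\leq\; L_k \|x - x'\|,
\end{equation*}
so $T_y$ is a strict contraction on the complete metric space $\R^n$. The Banach fixed-point theorem then produces a unique $x^\star$ with $T_y(x^\star) = x^\star$, which is exactly the unique preimage $g_k^{-1}(y)$. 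Hence $g_k$ is a bijection, and
$F^{-1} = g_0^{-1} \circ g_1^{-1} \circ \cdots \circ g_{N-1}^{-1}$
is a well-defined inverse of $F$.

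The argument is essentially a single application of Banach's theorem at each time step, so there is no deep obstacle to overcome. The one point requiring care is definitional: the lemma uses the word ``contractive,'' and the argument genuinely needs a \emph{strict} contraction ($L_k < 1$) rather than mere non-expansiveness, since Banach's theorem fails without the strict inequality. I would therefore state this convention explicitly at the outset of the proof. A secondary observation worth making is that the same contraction argument also suggests a constructive way of evaluating $g_k^{-1}$ numerically by fixed-point iteration, which is precisely how invertible residual networks realize their inverse in practice.
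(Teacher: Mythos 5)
Your argument is correct and is essentially the proof given in the cited source \cite{behrmann2019invertible} (the paper itself states this lemma without proof): decompose $F$ into the one-step residual maps and invert each via the Banach fixed-point theorem applied to $T_y(x) = y - f_k(x)$. Your clarification that ``contractive'' must mean Lipschitz constant strictly less than one, and your remark that the fixed-point iteration doubles as the practical inversion scheme, both match the treatment in that reference.
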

For a different technique that requires modeling the individual state transition functions as gradients of a convex potential, we refer the reader to \cite{huang2020convex}. 
Finally, one result that facilitates the derivation of the proposed algorithm is the equivalence of the KL-divergence before and after a diffeomorphic transformation.
\begin{lemma} ~\cite{papamakarios2021normalizing} (Forward and Backward KL divergence) \label{FB_KL}
    Let $F:\R^n\rightarrow \R^n$ be a diffeomorphism and let $\rho_1, \rho_2$ be two distributions. Then 
    \begin{equation}
        \KLdiv (\rho_1 \| \rho_2) = \KLdiv (F_{\#} \rho_1 \| F_{\#} \rho_2 ).
    \end{equation}
\end{lemma}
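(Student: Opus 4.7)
The plan is to start from the analytic definition of KL divergence on the pushforward side, apply Lemma~\ref{change_of_vars} to write the pushforward densities in terms of the originals, observe that the Jacobian determinants cancel inside the logarithm, and then perform a change of variables in the outer integral to recover the original KL divergence.

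\textbf{Step 1: pushforward densities.} Let $p_1, p_2$ denote the PDFs of $\rho_1, \rho_2$ and let $q_1, q_2$ denote the PDFs of $F_\#\rho_1, F_\#\rho_2$. Since $F$ is a diffeomorphism, Lemma~\ref{change_of_vars} applies to both, giving
\begin{equation*}
q_i(y) = p_i\bigl(F^{-1}(y)\bigr) \, \det \nabla F^{-1}(y), \qquad i = 1,2.
\end{equation*}

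\textbf{Step 2: ratio inside the logarithm.} Form the ratio $q_1(y)/q_2(y)$. Because the Jacobian factor $\det \nabla F^{-1}(y)$ is identical for $i=1$ and $i=2$ (it depends only on $F$, not on the distribution being pushed forward), it cancels, leaving
\begin{equation*}
\log \frac{q_1(y)}{q_2(y)} \;=\; \log \frac{p_1\bigl(F^{-1}(y)\bigr)}{p_2\bigl(F^{-1}(y)\bigr)}.
\end{equation*}

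\textbf{Step 3: change of variables in the outer integral.} Starting from \eqref{KLdiv_analytic} for the pushforward pair, substitute $x = F^{-1}(y)$, so that $y = F(x)$ and $\mathrm{d}y = \det \nabla F(x)\, \mathrm{d}x$. Combining with Step~2,
\begin{align*}
\KLdiv(F_\#\rho_1 \| F_\#\rho_2)
&= \int q_1(y) \log \frac{q_1(y)}{q_2(y)}\, \mathrm{d}y \\
&= \int p_1(x)\, \det\nabla F^{-1}(F(x)) \, \det \nabla F(x) \, \log \frac{p_1(x)}{p_2(x)}\, \mathrm{d}x.
\end{align*}
By the inverse function theorem, $\det\nabla F^{-1}(F(x)) \, \det \nabla F(x) = 1$, so the right-hand side collapses to $\int p_1(x)\log\bigl(p_1(x)/p_2(x)\bigr)\mathrm{d}x = \KLdiv(\rho_1 \| \rho_2)$, which is the claim.

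\textbf{Anticipated difficulty.} The argument is essentially a bookkeeping exercise once Lemma~\ref{change_of_vars} is in hand; no real obstacle arises. The only subtlety worth flagging is sign/absolute-value conventions on the Jacobian determinants (the paper's statement of Lemma~\ref{change_of_vars} omits absolute values, tacitly assuming orientation-preserving $F$ or reading $\det$ as $|\det|$), but under either reading the two Jacobian factors in Step~3 cancel exactly, so the conclusion is unaffected.
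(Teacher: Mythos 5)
The paper does not prove this lemma---it is quoted from the cited reference \cite{papamakarios2021normalizing} without proof---so there is no in-paper argument to compare against. Your proof is correct and is the standard one: the Jacobian factor from Lemma~\ref{change_of_vars} cancels in the log-ratio, and the change of variables $y=F(x)$ in the outer integral (equivalently, writing the integral as $\E_{y\sim F_{\#}\rho_1}[\cdot]=\E_{x\sim\rho_1}[\cdot\circ F]$) collapses it to $\KLdiv(\rho_1\|\rho_2)$; your remark about the absolute value on the determinant is the right caveat and does not affect the cancellation.
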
 

\subsection{Exact Steering Using Semidefinite Programming}  \label{sub:SDP}

For the special case of a system with linear dynamics of the form
\begin{equation} \label{lin_dyn}
    x_{k+1} = A x_k + B u_k,   
\end{equation}
and for Gaussian initial and terminal state distributions, the optimal policy for Problem \eqref{reference_problem} is parametrized by an affine feedback controller of the form \cite{liu2022optimal}
\begin{equation} \label{affine_policy}
    \pi_{k}(x_k) = K_k (x_k - \mu_k) + v_k,
\end{equation}
where $\mu_k = \E[x_k]$, while its solution, i.e., the calculation of $\{K_k, v_k \}$ for $k = 0, 1, \dots N-1$, can be attained efficiently through semidefinite programming.
The soft-constrained version \eqref{reference_problem_soft}
has only been studied for a Wasserstein-2 soft constraint penalty function in \cite{balci2022exact}.
The KL-divergence soft-constrained version can be solved similarly.
Although this paper focuses on the nonlinear version of the problem, the case of linear prior dynamics with Gaussian boundary conditions will be used as a benchmark, to validate the accuracy of the proposed algorithm. 
To this end, we briefly present how one can calculate the optimal solution to Problem (\ref{reference_problem_soft}) with a controller of the form \eqref{affine_policy} using semidefinite programming. 
Although we do not explicitly prove that this family of policies is globally optimal for the problem in this paper, this has been shown to be the case for the hard constrained Problem \eqref{reference_problem} 
in~\cite{liu2022optimal}.

\subsection{KL-divergence Covariance Steering} \label{KL-cov}

When the dynamics of the system are linear and a control policy of the form \eqref{affine_policy} is used, the first two moments of the state can be calculated explicitly at any time instant $k$ in the steering horizon.
The corresponding equations are
\begin{subequations} \label{moment_prop}
\begin{align}
    & \mu_{k+1} = A\mu_k + B v_k, \\
    & \Sigma_{k+1} = (A + B K_k) \Sigma_k (A + B K_k)\t,
\end{align}
\end{subequations}
where $\Sigma_k = \cov(x_k)$. The KL-divergence between the terminal distributions $\rho_N = \mathcal{N}(\mu_N, \Sigma_N)$ and $\rho_f = \mathcal{N}(\mu_f, \Sigma_f)$ can be calculated via ~\cite{duchi2007derivations}
\begin{subequations}
\begin{align*}
    \KLdiv(\rho_N \| \rho_f) = &  \half  ( \tr(\Sigma_f^{-1}\Sigma_N) + (\mu_f - \mu_N)\t \Sigma_f^{-1} (\mu_f - \mu_N)\\
    & + \log \det(\Sigma_f) - \log \det (\Sigma_N)  - n ),
\end{align*}
\end{subequations}
where $n$ is the dimension of the state vector. 
Using equations \eqref{moment_prop}, the change of variables $K_k = \Sigma_k^{-1} U_k$, and $Y_k = U_k \Sigma_k^{-1} U_k$ in 
Problem~\eqref{reference_problem_soft} yields
\begin{subequations} \label{SDP}
\begin{align}
& \min \; J = \sum_{k = 0}^{N-1} \tr (Y_k) + \|v_k\|^2 + \lambda \KLdiv(\rho_N\|\rho_f),  \\
&  U_k \Sigma_k^{-1} U_k = Y_k \label{relaxation}, \\
& \Sigma_{k+1} = A \Sigma_k A\t + B U_k + U_k\t B \t + B Y_k B\t, \\
& \mu_{k+1} = A\mu_k + B v_k, \\
& \Sigma_0 = \Sigma_i, \\ 
& \mu_0 = \mu_i, \\
& \Sigma_N = \Sigma_f, \\
& \mu_N = \mu_f.
\end{align}
\end{subequations}
Relaxing \eqref{relaxation} to the semidefinite inequality $U_k \Sigma_k^{-1} U_k \preceq Y_k$ turns Problem \eqref{SDP} into a semidefinite program. 
This relaxation has been proven to be lossless in \cite{liu2022optimal, balci2022exact}. 
Finally, we note that the term $-\log \det(\Sigma_f)$ in the KL-divergence is convex with respect to $\Sigma_f$ and can be added to the cost function using appropriate slack variables accompanied by an LMI constraint \cite{aps2020mosek}. 
\section{Maximum likelihood Distribution Steering}

This section contains the main results of the paper. 
To this end, consider Problem \eqref{reference_problem_soft} with a policy parametrized by a neural network, i.e., $u_k = \pi_{k}(x_k ; \theta)$ where $\theta$ corresponds to the trainable policy parameters.
To optimize \eqref{reference_problem_soft}, tractable expressions for the cost function \eqref{ref_soft:cost}
%
%
must be developed. 
The first step is arguably the calculation of the KL divergence. 
Calculating it directly would involve the explicit calculation of the probability density of the state at the end of the steering horizon, $p_N$, which is challenging due to the nonlinearities of the system.
Instead of calculating $p_N$ directly, let $F(x_0) = \Phi_{N-1} \circ \Phi_{N-2} \circ \dots \circ \Phi_{0} (x_0)$ denote the transformation linking the initial and terminal states under the discrete dynamic model \eqref{ref:dyn}, where 
\begin{equation}\label{state_trans}
    \Phi_k(x_k) = f_k(x_k) + B_k \pi_k(x_k),
\end{equation} 
is the closed-loop state transition function at time step $k$.
Under certain conditions on the control policy that will be specified later in the section, this transformation is diffeomorphic.
Therefore, its inverse $x_0 = F^{-1}(x_N)$ satisfies the conditions of Lemma \ref{FB_KL}.
Applying this result to the KL divergence yields 
\begin{equation*}
    \KLdiv (\rho_N \| \rho_f) = \KLdiv (\rho_i \| F^{-1}_{\#} \rho_f).
\end{equation*}
Further expanding the second term using the definition of the KL divergence, yields 
\begin{equation*}\label{ML}
    \KLdiv (\rho_i \| F^{-1}_{\#} \rho_f)) =  \E_{x \sim \rho_i} [\log p_i(x)] - \E_{x \sim \rho_i} [\log p_{0}(x)],
\end{equation*}
where $p_i(x)$ is the PDF of $\rho_i$ and $p_0(x)$ is the PDF of the distribution $F^{-1}_{\#} \rho_f$, that is, the density of a random variable sampled from $\rho_f$ and pushed through the inverse transformation $F^{-1}$.
Notice that the term $\E_{x \sim \rho_i} [p_i(x)]$ does not depend on the control policy parameters, and can therefore be omitted from the cost function of \eqref{reference_problem_soft}. 

The calculation of $\log p_{0}(x)$ can be facilitated through Lemma~\ref{change_of_vars}. 
Specifically, one can link the density $p_{0}$ with the density of $p_f$ using 
\begin{equation*}
    \log p_{0}(x) = \log p_{f} (F(x)) + \log \det \nabla F(x).
\end{equation*}
The second term can also be efficiently calculated using the chain rule as follows 
\begin{equation*}
    \log \det \nabla F(x) = \sum_{k=0}^{N-1} \log \det \nabla \Phi_{k} (x_k).
\end{equation*}
In our implementation, the Jacobian of the state transition functions $\nabla \Phi_k(x_k)$ were calculated using automatic differentiation. 

Finally, we discuss conditions for $\pi_k(x_k; \theta)$ that preserve the invertibility of the discrete-time dynamics.
Without loss of generality, we assume that the state transition functions \eqref{state_trans} are of the form 
\begin{equation} \label{affine_dyn}
    x_{k+1} = x_k + \phi_k(x_k) + B_k \pi_k(x_k ; \theta).
\end{equation} 
\begin{proposition} \label{thm:inv}
    Let the system dynamics be described by \eqref{affine_dyn}, and let $L_{\phi_k}, L_{\pi_k}$ be the Lipschitz constants of $\phi_k, \pi_k$, respectively, and let $\sigma_{B_k}$ be the spectral norm of the matrix $B_k$. 
    Then, if $L_{\pi_k} < (1 -L_{\phi_k})/\sigma_{B_k}$, the state transition function defined in \eqref{affine_dyn} is a diffeomorphism.  
\end{proposition}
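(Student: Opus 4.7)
The plan is to reduce the proposition to Lemma \ref{Invertibility} and then upgrade the resulting invertibility statement to a diffeomorphism claim via the inverse function theorem.

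First I would rewrite the closed-loop step as $\Phi_k(x_k) = x_k + g_k(x_k)$ where $g_k(x_k) := \phi_k(x_k) + B_k \pi_k(x_k;\theta)$. This places each step in exactly the structural form required by Lemma \ref{Invertibility}, so it suffices to verify that every $g_k$ is contractive. For any $x,y \in \R^n$, the triangle inequality and the definition of the spectral norm give
\begin{equation*}
\|g_k(x) - g_k(y)\| \le \|\phi_k(x) - \phi_k(y)\| + \sigma_{B_k} \|\pi_k(x;\theta) - \pi_k(y;\theta)\| \le (L_{\phi_k} + \sigma_{B_k} L_{\pi_k}) \|x - y\|.
\end{equation*}
The hypothesis $L_{\pi_k} < (1 - L_{\phi_k})/\sigma_{B_k}$ rearranges to $L_{\phi_k} + \sigma_{B_k} L_{\pi_k} < 1$, so $g_k$ is a strict contraction. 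Applying Lemma \ref{Invertibility} to the composition $F = \Phi_{N-1} \circ \cdots \circ \Phi_0$ then yields that $F$ is a bijection with a well-defined inverse.

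To promote this to a diffeomorphism I would argue at the single-step level and then compose. Assuming $\phi_k$ and $\pi_k$ are continuously differentiable (which is the standing setting, since automatic differentiation of $\nabla \Phi_k$ is used in the algorithm), $\Phi_k$ is $C^1$ with Jacobian $\nabla \Phi_k(x_k) = I + \nabla g_k(x_k)$. Because $g_k$ is a contraction with constant strictly less than one, $\|\nabla g_k(x_k)\|_2 < 1$ pointwise, so every singular value of $\nabla \Phi_k(x_k)$ is bounded below by $1 - (L_{\phi_k} + \sigma_{B_k} L_{\pi_k}) > 0$. In particular $\nabla \Phi_k$ is everywhere invertible, and the inverse function theorem guarantees that $\Phi_k^{-1}$ is itself $C^1$ in a neighborhood of every point. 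Combined with the global bijectivity already established, each $\Phi_k$ is a global $C^1$ diffeomorphism, and the composition $F$ is therefore a diffeomorphism as well.

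The only step that requires any care is the differentiability of the inverse: Lemma \ref{Invertibility} on its own only delivers a bijection, so the non-degeneracy of $\nabla \Phi_k$ coming from the strict contraction bound is what bridges the gap. Everything else is a routine Lipschitz calculation.
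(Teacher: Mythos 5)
Your proposal follows essentially the same route as the paper: both reduce the claim to Lemma \ref{Invertibility} by writing the closed-loop step as an identity-plus-residual map and bounding the Lipschitz constant of $\phi_k + B_k \pi_k$ by $L_{\phi_k} + \sigma_{B_k} L_{\pi_k}$ via subadditivity and the spectral norm of $B_k$, so that the hypothesis forces a strict contraction. The one addition you make --- using the lower bound $1 - (L_{\phi_k} + \sigma_{B_k} L_{\pi_k}) > 0$ on the singular values of $\nabla \Phi_k$ together with the inverse function theorem to show the inverse is $C^1$ --- is a worthwhile refinement, since Lemma \ref{Invertibility} only delivers invertibility and the paper leaves the smoothness of the inverse implicit in asserting a diffeomorphism.
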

\begin{proof}
    Based on Lemma \ref{Invertibility}, it suffices to show that $\phi(x_k) + B_k \pi_k(x_k)$ is a contraction.
    To upper bound its Lipschitz constant note that $\| \nabla \left( \phi_k + B_k \pi_k(x_k) \right) \|_2 \leq \| \nabla \phi_k \|_2 + \| B_k \nabla \pi_k \|_2 \leq L_{\phi_k} + \sigma_{B_k} L_{\pi_k}$, 
    due to the subadditivity and submultiplicativity of the spectral norm \cite{bernstein2009matrix}. 
    Constraining this upper bound yields the desired result. 
\end{proof}
\begin{remark}
    In the case where $f_k, B_k$ in \eqref{ref:dyn} are discretized versions of the continuous dynamics $\dot{x}_t = f_t(x_t) + B_t u_t$, then the first order approximation of the terms in \eqref{affine_dyn} are $\phi_k(x_k) = \Delta T f_t(x_k)$ and $B_k = \Delta T B_t$, where $\Delta T$ is the discritzation step size. 
    Therefore, for Lipschitz continuous-time dynamics, $L_{\phi_k}$ and $\sigma_{B_k}$ can be made sufficiently small by reducing the discretization step $\Delta T$.
\end{remark}

Note that training Neural Networks with bounded Lipschitz constants can be achieved using spectral normalization \cite{miyato2018spectral}. 
In this work, we use $\pi_k = \alpha L_{\pi_k} \hat{\pi}_k$ where $L_{\pi_k} = (1 -L_{\phi_k})/\sigma_{B_k}, \; \alpha \in (0 , 1)$ and $\hat{\pi}_k$ is a Multilayer Perceptron (MLP) with spectral normalization in all of its weights, having therefore a Lipschitz constant of 1.

After calculating the loss function, optimization is carried out using standard gradient-based optimizers. 
For our implementation, we used the AdamW \cite{loshchilov2017decoupled} scheme, implemented in pytorch \cite{paszke2019pytorch}.
\section{Numerical examples}

\begin{figure*}
    \centering
    \begin{subfigure}[t]{0.33\textwidth}
        \centering
        \includegraphics[height=1.5in]{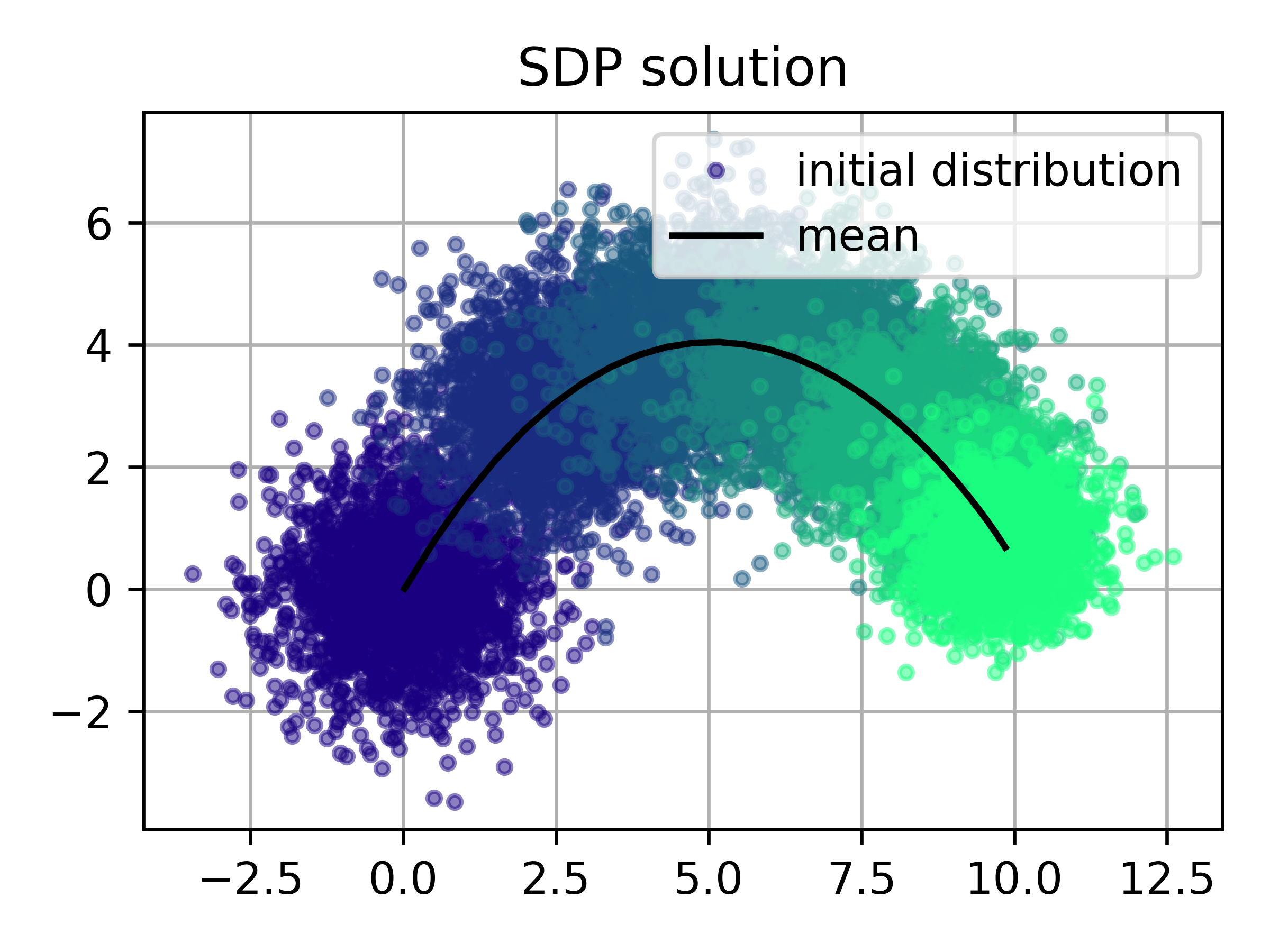}
        \caption{}
    \end{subfigure}%
    \begin{subfigure}[t]{0.33\textwidth}
        \centering
        \includegraphics[height=1.5in]{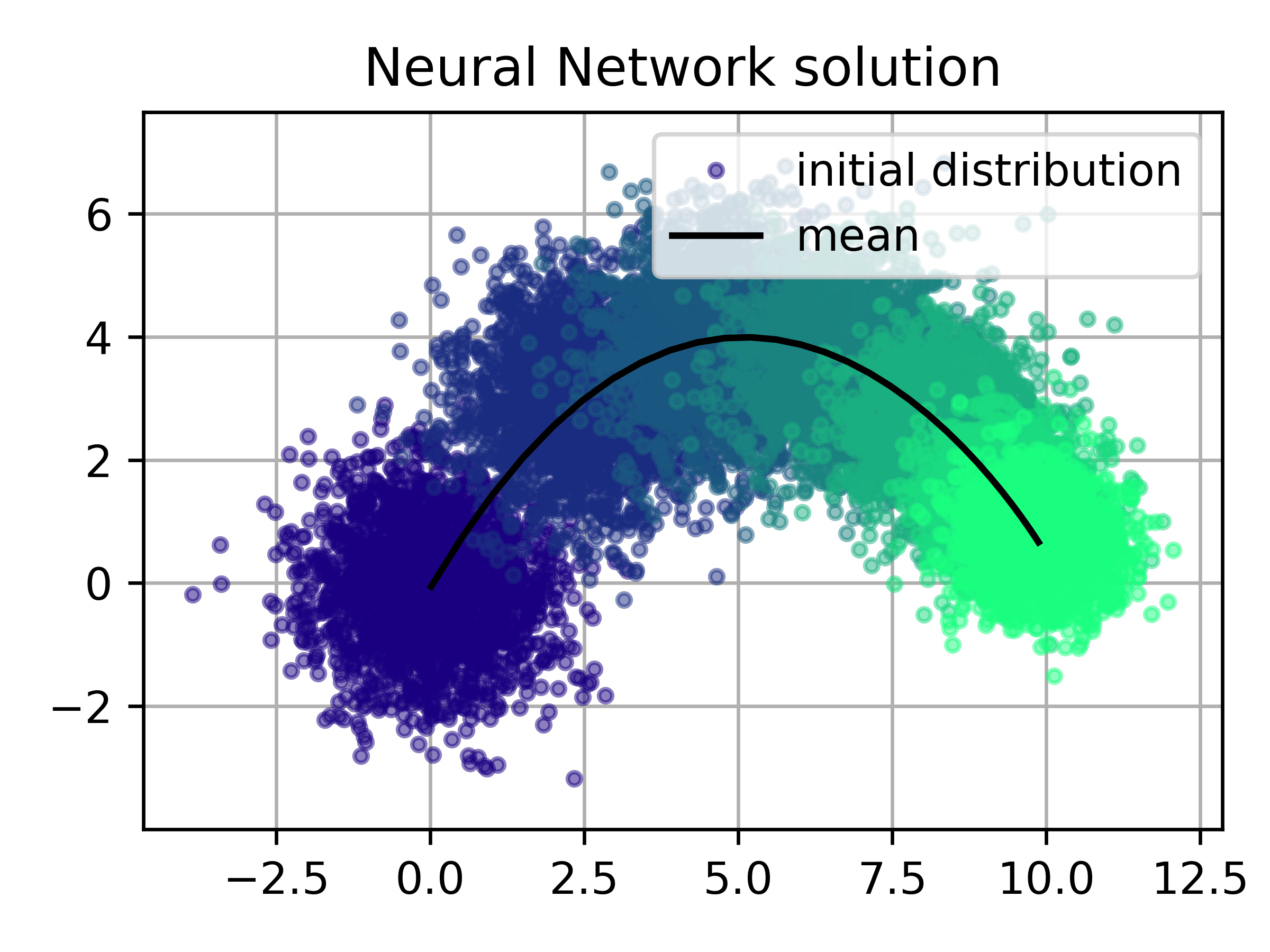}
        \caption{}
    \end{subfigure}
    \begin{subfigure}[t]{0.33\textwidth}
        \centering
        \includegraphics[height=1.5in]{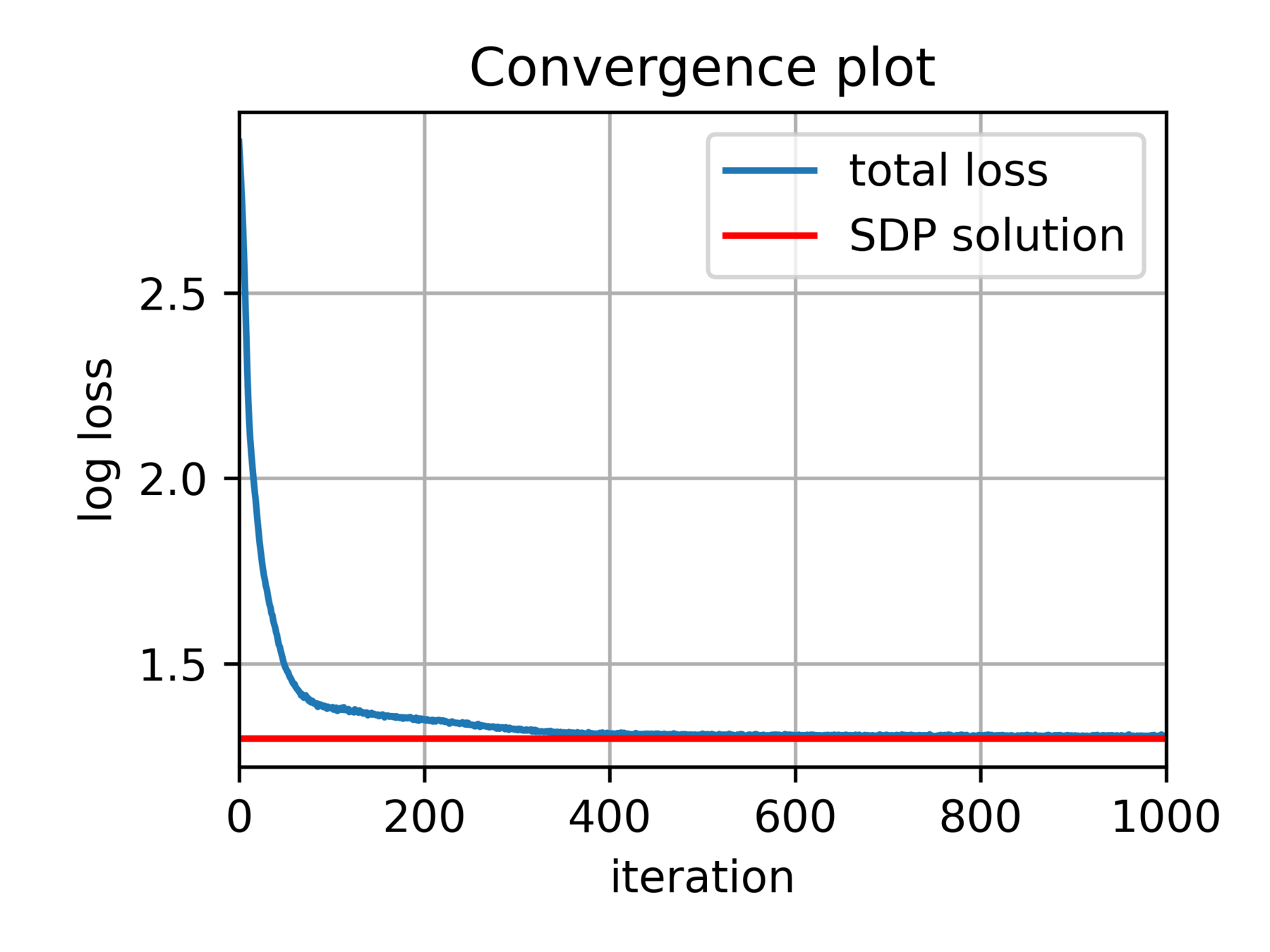}
        \caption{}
    \end{subfigure}
    \caption{(a) Exact SDP solution (b) Neural Network solution (c) Convergence plot along with optimal cost calculated using the SDP method. For figures (a), (b) the axes correspond to the first two states of the 2D double integrator.}
    \label{fig:comparison_cc}
\end{figure*}
In the first numerical example, we study the problem of driving a double integrator system from an initial to a terminal Gaussian distribution. 
Since an exact solution can be obtained for this problem using the results from Section~\ref{KL-cov}, we use this example as a benchmark.
To this end, consider the discrete-time deterministic dynamics of the form \eqref{lin_dyn} with
\begin{equation*}
    A = \begin{bmatrix} \ I_2 & \Delta T I_2 \\ 0_2 & I_2 \end{bmatrix}, \quad B = \begin{bmatrix} 0_2 \\ \Delta T I_2  \end{bmatrix}, \quad \Delta T = 0.1,
\end{equation*}
a horizon of $N=30$ time steps, $V_k(x_k) \equiv 0$ and $\lambda = 60$, which is equivalent with normalizing $\| u_k \|^2$ with $ 1/(N m)$ where $m$ is the number of input channels. We use this value for $\lambda$ for all the subsequent examples. 
The boundary distributions are $x_0 \sim \mathcal(\mu_i, \Sigma_f)$ and $x_N \sim \mathcal{N}(\mu_f, \Sigma_f)$ with parameters $\mu_i = [0,\; 0, \; 5,  \; 8], \quad \Sigma_i = \blkdiag(1, 1, 0.2, 0.2), \mu_f = [0,\; 0, \; 10,  \; 0], \quad \Sigma_f = 0.4 I_4$.
For this example, the Lipschitz constants of the prior dynamics are $L_\phi = \sigma_B = \Delta T$. 
To this end, we set $L_{\pi} = 9, \alpha = 0.9$ and $\pi_k = \alpha L_{\pi} \hat{\pi}_k$.
Each policy $\hat{\pi}_k(\cdot)$ is modeled using a fully connected MLP with spectral normalization, and five layers with $\{4, 64, 64, 64, 64, 2\}$ neurons per layer.
The convergence plot, along with the optimal solution calculated using the SDP technique described in Section~\ref{KL-cov} can be viewed in Figure \ref{fig:comparison_cc}. 
%
%
\begin{figure*}
    \centering
    \includegraphics[width=1 \textwidth]{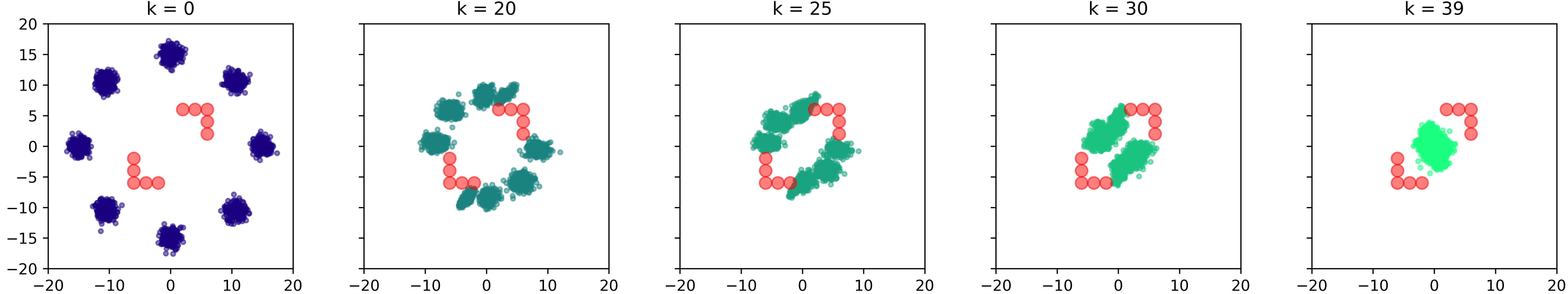}
    \caption{GMM to Gaussian with given mean and covariance, double integrator prior dynamics and obstacles.}
    \label{fig:GMM2G_4D}
\end{figure*}
\begin{figure*}
    \centering
    \includegraphics[width=1 \textwidth]{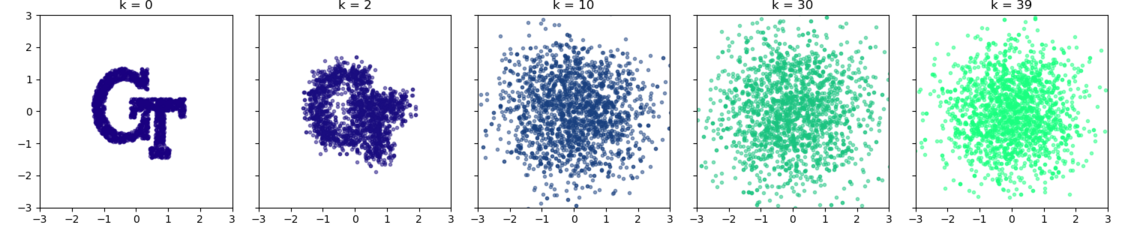}
    \caption{GT to Gaussian distribution steering with given mean and covariance with double integrator prior dynamics.}
    \label{fig:GT2G}
\end{figure*}
\begin{figure*}
    \centering
    \includegraphics[width=1 \textwidth]{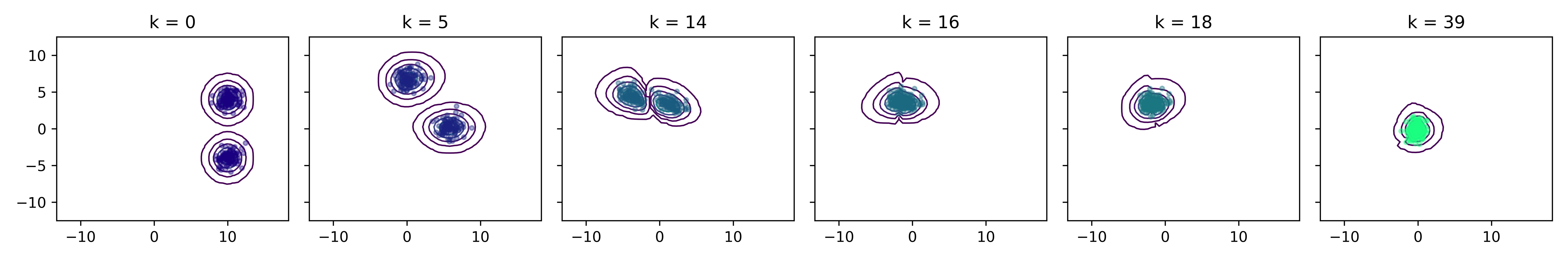}
    \caption{GMM to Gaussian distribution steering with nonlinear prior dynamics. The axes correspond to the system states.}
    \label{fig:G2GMM_nonlin}
\end{figure*}

In the second numerical example, we use double integrator dynamics, a horizon of $N=40$, but this time, we opt to steer from a Gaussian Mixture Model with 8 modes to the Normal distribution in the presence of obstacles.
The obstacles are modeled using appropriate potential fields with Gaussian kernels \cite{onken2022neural} of the form
\begin{equation}
    V_k(x) = \lambda_{\textrm{obs}} \exp \left( -\frac{(x - x_0)^2}{r_{\textrm{obs}}^2} \right).
\end{equation}
The policy at each time step has the same structure as in the first example but with 128 neurons in the hidden layers. 
The results are illustrated in Figure~\ref{fig:GMM2G_4D}.

Another, more complicated example, which capitalizes on the fact that only a batch of samples is required for the initial distribution rather than its PDF, is depicted in Figure~\ref{fig:GT2G}, where the initial distribution is arbitrary and the terminal is the normal distribution. 
The prior dynamics and policy parametrization are identical to Example~2.
We note that the inverse problem, i.e., steering from a Gaussian distribution to an arbitrary distribution for which only samples are available, would be significantly harder and cannot currently be solved with the proposed approach, since explicit information about the PDF of the terminal distribution is required for the maximum likelihood training.
We leave the investigation of this case as part of future work.

Finally, we test the proposed method in the 2D nonlinear model 
\begin{subequations}
    \begin{align}
        & x_{k+1} = x_k + 0.1 \sqrt{1 + y_k^2} + u_k, \\
        & y_{k+1} = y_k + 0.1 x_k,
    \end{align}
\end{subequations}
for $N=40$.
By bringing these equations in the form of \eqref{affine_dyn} with  $L_B = \sigma_B = 0.1$, 
we may use the same policy parametrization as in the first example. 
The results are illustrated in Figure~\ref{fig:G2GMM_nonlin}.
Since this is a two-dimensional example, we can overlay the samples with the contour lines of the PDF at each time step to validate the precision of the solution.

Table~\ref{tab:quant} demonstrates a quantitative evaluation of the algorithm's performance. 
The first column corresponds to the experiment number. 
To measure the distance between the distribution of the state at the final time step of the steering horizon and the target distribution, we calculate the 2-Wasserstein distance using discrete Optimal Transport \cite{peyre2017computational} and report its value in the second column.
We use the 2-Wasserstein distance because it can be computed exactly on empirical distributions that are available only through samples and accurately reflects the actual distance between the continuous distributions given enough samples.
In the third column, we report the minimum value of the log-determinant of the Jacobian of the optimal map linking the initial and final state in order to validate the invertibility of the computed map and finally provide the total training time in minutes in the last column.
Training was performed on an Nvidia RTX-3070 GPU. 


\begin{table}[h]
    \caption{Quantitative analysis of the proposed approach.}
    \label{tab:quant}
    \centering
    \begin{tabular}{|c|c|c|c|}
        \hline
         Exp. \# & $\mathbb{W}(\rho_N \| \rho_f)$ & min $ | \log \det \nabla F | $ & Training time [m] \\
        \hline
         1 &  1.12 & 0.32 & 4.1 \\
        \hline
         2 &  1.56 & 3.19 & 20.1\\
        \hline 
         3 & 0.64 & 0.42 & 13.53\\
        \hline 
         4 & 0.18  & 0.17 & 10.5 \\
        \hline 
    \end{tabular}
\end{table}
\section{Conclusions}
This paper presents a method for solving the distribution steering problem for discrete-time nonlinear systems by formulating it as a regularized maximum likelihood optimization problem.
The control policies are parametrized using neural networks with appropriate Lipschitz constraints to ensure the invertibility of the discrete-time dynamics.
A general cost function is considered, allowing state-dependent terms to model obstacles in the state space using potential fields. 
In parallel, a KL-divergence soft constraint version of the Covariance Steering problem is developed as a benchmark to compare with the proposed nonlinear maximum likelihood methods. 
Finally, four comprehensive numerical examples are presented and analyzed with respect to how closely they achieve the target distribution, as well as in terms of run time.
For the linear dynamics with Gaussian boundary distributions, the solution is also compared against the globally optimal solution calculated as the solution of a semidefinite program. 
\section*{ACKNOWLEDGMENT}
The authors would like to sincerely thank Dr. Ali Reza Pedram for his comments in an initial version of the manuscript and for multiple fruitful discussions. 
Support for this work has been provided by 
ONR award N00014-18-1-2828 and NASA ULI award \#80NSSC20M0163.
This article solely reflects the opinions and conclusions of its authors and not of any NASA entity.


\bibliographystyle{ieeetr}
\bibliography{refs}

\end{document}